
\typeout{IJCAI--PRICAI--20 Instructions for Authors}


\documentclass{article}
\pdfpagewidth=8.5in
\pdfpageheight=11in
\usepackage{ijcai20}

\usepackage{times}

\usepackage{soul}
\usepackage{url}
\usepackage[hidelinks]{hyperref}
\usepackage[utf8]{inputenc}
\usepackage[small]{caption}
\usepackage{graphicx}
\usepackage{amsmath}
\usepackage{amsthm}

\usepackage{amssymb}
\usepackage{amsfonts}

\usepackage{color}

\usepackage{booktabs}
\usepackage{algorithm}
\usepackage{algorithmic}
\urlstyle{same}



\newtheorem{theorem}{Theorem}
\newtheorem{definition}{Definition}
\newtheorem{lemma}{Lemma}
\newtheorem{mechanism}{Mechanism}
\newtheorem{proposition}{Proposition}




\title{Sybil-proof Answer Querying Mechanism}


\author{
Yao Zhang
\and
Xiuzhen Zhang
\And
Dengji Zhao
\affiliations
Shanghai Engineering Research Center of Intelligent Vision and Imaging, ShanghaiTech University
\emails
\{zhangyao1, zhangxzh1, zhaodj\}@shanghaitech.edu.cn
}

\begin{document}

\maketitle

\begin{abstract}
We study a question answering problem on a social network, where a requester is seeking an answer from the agents on the network. The goal is to design reward mechanisms to incentivize the agents to propagate the requester's query to their neighbours if they don't have the answer. Existing mechanisms are vulnerable to Sybil-attacks, i.e., an agent may get more reward by creating fake identities. Hence, we combat this problem by first proving some impossibility results to resolve Sybil-attacks and then characterizing a class of mechanisms which satisfy Sybil-proofness (prevents Sybil-attacks) as well as other desirable properties. Except for Sybil-proofness, we also consider cost minimization for the requester and agents' collusions.
\end{abstract}

\section{Introduction}
The development of online social networks has offered many opportunities for people to collaborate remotely in real time, such as 
P2P file-sharing network (e.g., BitTorrent) and Q\&A platforms (e.g., Quora and Stack Overflow). Inspired by these applications, there are rich theoretical studies to look at the mechanism design problems on social networks~\cite{rahman2009survey,emek2011mechanisms,li2017mechanism}.

In this paper, we focus on the answer/resource querying mechanisms via a social network where a requester is searching for a single answer from the network. Here we have two main challenges. The first is that the requester is only connected to a few agents (her neighbours) on the network and she needs to find a way to inform the other agents on the network if her neighbours do not have the answer. This can be achieved by incentivizing her neighbours to propagate the query to their neighbours (diffusion incentives). 
\citeauthor{kleinberg2005query}~\shortcite{kleinberg2005query} first formulated diffusion incentives in query networks. Later on, \citeauthor{arcaute2007threshold}~\shortcite{arcaute2007threshold} and \citeauthor{kota2010threshold}~\shortcite{kota2010threshold} studied the threshold of rewards needed to incentivize the query network. \citeauthor{kleinberg2007cascading}~\shortcite{kleinberg2007cascading} further discussed whether agents will act as the requester expects under different reward settings.
Similar approach has been applied in advertising~\cite{li2012diffusion}, auctions~\cite{zhao2018selling}, recommendations~\cite{margaris2016recommendation} and others. 

Once the first challenge is sovled, then an immediate new challenge is Sybil attacks, where an agent pretends to be multiple agents to gain more rewards~\cite{conitzer2010false}. 
Sybil attack has been studied in many other applications such as multi-level marketing ~\cite{emek2011mechanisms,drucker2012simpler,shen2019multi}, social choice~\cite{conitzer2010using} and blockchains~\cite{babaioff2012bitcoin,ersoy2018transaction}. 
For query networks, Sybil attack has been investigated under various settings. For example, \citeauthor{seuken2014sybil}~\shortcite{seuken2014sybil} studied the trade-off between the transitive trust and Sybil attacks in P2P file-sharing networks. \citeauthor{chen2013sybil}~\shortcite{chen2013sybil} proposed a mechanism based on the query incentive network~\cite{kleinberg2005query} which achieves Sybil-proofness in expectation. \citeauthor{nath2012mechanism}~\shortcite{nath2012mechanism} identified a collusion-proof mechanism with approximated Sybil-proofness. However, all the existing work has only discussed Sybil-proofness in approximation or expectation on query networks.

Therefore, in this paper, we aim to solve both challenges in a dominant strategy implementation for query networks. 
We first demonstrate the difficulties to solve the challenges by proving the impossibility results. We then characterize a class of mechanisms, called double geometric mechanism (DGM), to satisfy the desirable properties. Except for the diffusion incentive and Sybil-proofness, we also look at the cost minimization problem for the requester and preventing agents' collusion. We will show that Sybil-proofness and collusion-proofness are not compatible.


\subsection{Related Work}\label{sec:rw}
The query incentive network model was first proposed by  \citeauthor{kleinberg2005query}~\shortcite{kleinberg2005query}, where each agent in a $d$-ary tree network has the same probability to hold the answer and the actual query network is generated by a branching process (the cost in the query process is negligible). They considered a decentralized reward mechanism where each agent strategically chooses a fixed amount of reward to offer to her children if she can receive the answer from them. Different from the fixed reward mechanism, \citeauthor{cebrian2012finding}~\shortcite{cebrian2012finding} proposed a split contract mechanism in the same setting, which was motivated by the success of the winning strategy in the DARPA 2009 Network Challenge~\cite{pickard2011time}. In their split contract mechanism, each agent should determine the splits of the reward received from her parent to offer to her children. In both studies, they only considered the Nash equilibrium implementation, while we consider dominant strategy implementation here.

Moreover, the studies mentioned above mainly focused on propagating the query in the network and they did not consider agents' Sybil attacks. However, when we consider centralized query mechanisms where the reward distribution is decided by the requester, Sybil attack is a problem if the requester cannot verify their indenties~\cite{douceur2002sybil}. 
In the previous work, \citeauthor{chen2013sybil}~\shortcite{chen2013sybil} proved that split contract mechanisms cannot prevent Sybil attacks. They proposed direct referral mechanism to deal with this problem by allocating the majority of the reward to the winner (the agent who holds the answer) as well as her parent, i.e., the direct referral. 
However, the Sybil-proofness of the direct referral mechanism is only in expectation, which means that Sybil attack might be beneficial for an agent if the agent can acquire more knowledge about the network.

Again for Syblil-proofness, in the setting of dominant strategy implementation, \citeauthor{nath2012mechanism}~\shortcite{nath2012mechanism} studied the split contract mechanism design. They identified a set of desirable properties including Sybil-proofness and collusion-proofness and proved that no mechanism can satisfy them simultaneously under some conditions. They also examined the well-known geometric mechanism and showed that it satisfies collusion-proofness and approximated Sybil-proofness. 

Different from the above, in this paper, we investigate centralized Sybil-proof reward mechanism design in the query network under dominant strategy implementation. We characterize a class of mechanisms to achieve Sybil-proofness and other properties in the query network.

The remainder of the paper is organized as follows. 
Section~\ref{sec:model} describes the model of the query network and introduces the desirable properties of the reward mechanism. Then we prove the impossibility results in Section~\ref{sec:imp}. Following that, we propose our double geometric mechanism and characterize its uniqueness under different properties in Section~\ref{sec:dgm}. Finally, we conclude and discuss the future work in Section~\ref{sec:con}.

\section{The Model}\label{sec:model}
We consider a question answering setting where a requester $r$ is seeking the answer of a question from a set of agents. The agents are connected via their social connections such as friendship and $r$ connects to a subset of them. $r$ will first ask her neighbours for the answer and if her neighbours do not have the answer, $r$ wants her neighbours to further propagate the question to their neighbours and so on. We assume that there is at least one agent who holds the answer and the answer is unique/verifiable. The goal of $r$ is to design a reward mechanism to find the answer from the network. Ideally, we want each agent to offer the answer if she has, otherwise, to propagate the question to her neighbours if there is any. Formally, the propagation process will build a query tree $T = (V,E)$ rooted at $r$, where $V$ is the set of all agents, including the requester $r$, who have been asked for the answer, and each edge $e = (i,j)\in E$ means that agent $i$ has propagated the query to agent $j$ and $j$ has either offered the answer or propagated the question to her neighbours. For each agent $i$ in $T$, let $p(i)$ be $i$'s direct parent and $s(i)$ be $i$'s direct children set.


We assume that there is one agent in $T$ who offered the answer, which is called the winner, denoted by $w$. It is clear that $w\neq r$ and if there are multiple agents offered the answer, we choose the one with the smallest depth with random tie-breaking. We call the path from $r$ to $w$ a winning path, denoted by $p_w = (i_1, i_2, \dots, i_n)$, where $i_1 \in s(r)$, $i_n = w$, $(i_j, i_{j+1})\in E$ for all $1\leq j< n$ and $n$ is the length of the winning path.


Given the above setting, the requester needs to design a reward mechanism $M: \mathcal{T} \rightarrow \mathbb{R}^{V\setminus\{r\}}$ to incentivize the agents to find the answer, where $\mathcal{T}$ is the space of all possible resulting query trees and the output is the reward allocation for each player in the tree. Denote the reward allocated to agent $i$ by $x_i$. In this paper, we focus on path mechanisms, which are mostly studied in the literature~\cite{kleinberg2005query,chen2013sybil}.

\begin{definition}
A reward mechanism is called a \textbf{path mechanism} if
\begin{enumerate}
 \item it only assigns non-zero rewards to the agents on the winning path, i.e., $x_i = 0$ for all $i\notin p_w$,
 \item the reward distributed to an agent on the winning path only depends on her depth and the length of the path. 
\end{enumerate} 
That is, a path mechanism $M$ can be represented by a \textbf{reward function} $x:\mathbb{N}\times \mathbb{N} \rightarrow \mathbb{R}$ for the agents on the winning path, where the first parameter is the agent's depth and the second parameter is the length of the winning path. 
\end{definition}

A path mechanism only rewards the agents on the winning path in the resulting query network. This will not weaken our results since the agents on the winning path made the actual contribution for seeking the answer. We assume that the cost for the query propagation is negligible as the propagation is often easy or automated~\cite{yu2003searching}. Therefore, the literature has also focused on path mechanisms.

In the following, we define the desirable properties for path mechanisms. Firstly, it should incentivize agents to offer the answer or further propagate the query to all their neighbours; otherwise, the query will stop at the requester's neighbours. We call this property incentive compatibility. 

\begin{definition}
A path mechanism is \textbf{incentive compatible (IC)} if for all agent $i$,
$x_i \geq x_i'$, where 
\begin{itemize}
\item $x_i$ is the reward $i$ receives if $i$ truthfully reports her answer, if $i$ has the answer, or propagates the query to all her neighbours if $i$ does not have the answer,
\item $x_i^\prime$ is the reward $i$ receives if $i$ behaves differently. 
\end{itemize}

\end{definition}

An incentive compatible mechanism guarantees that it will always find the answer if there is one in the network.
Secondly, we also require that every agent is not forced to participate in the mechanism, i.e., their reward should not be negative, which is called individual rationality.

\begin{definition}
A path mechanism is \textbf{individually rational (IR)} if for all agent $i\in p_w$, $x_i \geq 0$, i.e., $x(j,n)\geq 0$ for all $j,n\in \mathbb{Z}^+$, $j\leq n$. It is \textbf{strongly individually rational (SIR)} if for all agent $i\in p_w$, $x_i > 0$, i.e., $x(j,n)> 0$ for all $j,n\in \mathbb{Z}^+$, $j\leq n$.
\end{definition}

Notice that IR property can be easily satisfied even if the mechanism does not give any reward to any agent. Therefore, we also look for strongly IR path mechanisms, which at least reward something for every agent in the winning path. 
At the same time, the requester may also want to control the total reward distributed. 

\begin{definition}
A path mechanism is \textbf{budget constrained (BC)} if there exists a constant $B<\infty$ such that
\[ \sum_{i\in p_w} x_i = \sum_{j=1}^n x(j,n) \leq B \]
for all resulting query network $T$, for all winning path $p_w$ of length $n$ in $T$.
\end{definition}

Next, we consider the most important property of Sybil-proofness, which requires that the mechanism should be resistant to Sybil-attacks. Since the resulting query network is a tree and we focus on path mechanisms, fake identities which are not on the winning path benefit nothing. So there is only one kind of meaningful Sybil-attacks for agent $i$, i.e., pretending to be multiple agents from agent $p(i)$ to agents in $s(i)$ (extending the paths from $p(i)$ to $s(i)$).

\begin{definition}
A path mechanism is \textbf{Sybil-proof (SP)} if 
for any winning path $p_w = (i_1, i_2, \dots, i_n)$, if an agent $i_j\in p_w$ extends $p_w$ by making $m$ copies of herself to get the new winning path  $p_w^{\prime} = (i_1, \dots, i_{j-1}, i_j^0, i_j^1, \dots, i_j^m, i_{j+1}, \dots, i_n)$, we have:
\begin{equation}\label{eq:sp}
x_{i_j} = x(j,n) \geq \sum_{k=0}^{m} x'_{i_j^k} =\sum_{k=0}^{m} x(j+k, n+m)
\end{equation}
\end{definition}

Intuitively, if an agent pretends to be multiple agents along the winning path, the property of Sybil-proofness ensures that the total rewards she can get from the multiple agents is not more than what she will get originally. 
In our setting, Sybil-proofness can be easily verified as stated in Proposition~\ref{prop:sp}. 

\begin{proposition}
\label{prop:sp}
A path mechanism is Sybil-proof if and only if
\begin{equation*}
    x(j,n) \geq x(j,n+1) + x(j+1,n+1)
\end{equation*}
for all $j,n\in \mathbb{Z}^+$, $j\leq n$.
\end{proposition}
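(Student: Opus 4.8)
The plan is to handle the two directions separately. The ``only if'' direction is immediate: for any $n\ge 1$ and $1\le j\le n$ there is a query tree whose winning path is the bare chain $r\to i_1\to\cdots\to i_n=w$, so every pair $(j,n)$ with $j\le n$ occurs as a (depth, length) pair of an agent on some winning path; instantiating the definition of Sybil-proofness at such an agent with $m=1$ gives exactly $x(j,n)\ge x(j,n+1)+x(j+1,n+1)$, which is the claimed inequality.

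For the ``if'' direction I would argue by induction on the number of copies $m$, writing $g_m(j,n):=\sum_{k=0}^{m}x(j+k,n+m)$, so that the target~\eqref{eq:sp} is $x(j,n)=g_0(j,n)\ge g_m(j,n)$ for all $m\ge 1$ and all $j\le n$. The base case $m=1$ is the hypothesis. For the inductive step, apply the hypothesized inequality to each summand of $g_m(j,n)$, namely $x(j+k,n+m)\ge x(j+k,n+m+1)+x(j+k+1,n+m+1)$ (valid since $j+k\le n+m$), and add these $m+1$ inequalities; the left-hand side is $g_m(j,n)$ and the right-hand side equals $x(j,n+m+1)+2\sum_{k=1}^{m}x(j+k,n+m+1)+x(j+m+1,n+m+1)=g_{m+1}(j,n)+\sum_{k=1}^{m}x(j+k,n+m+1)$. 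Dropping the trailing sum yields $g_m(j,n)\ge g_{m+1}(j,n)$, and chaining $g_0\ge g_1\ge\cdots\ge g_m$ finishes. Equivalently, iterating the single-copy inequality down a full binary ``split'' tree of depth $m$ gives the closed form $x(j,n)\ge\sum_{k=0}^{m}\binom{m}{k}x(j+k,n+m)$, after which one collapses the weights $\binom{m}{k}\ge 1$ to $1$.

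The step I expect to be the real obstacle is exactly the ``dropping the trailing sum'' move (equivalently, collapsing the binomial weights): it requires the interior terms $x(j+k,n+m+1)$ for $1\le k\le m$ to be non-negative, i.e.\ it uses individual rationality. Accordingly I would read the proposition for IR mechanisms (the standard setting here, with IR already defined above) or add IR to its hypotheses; without non-negativity, the bare single-copy inequality does not by itself force Sybil-proofness for $m\ge 2$, as one can check by allowing suitably signed reward values. The rest of the argument is routine bookkeeping.
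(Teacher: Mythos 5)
Your proof is correct and follows essentially the same route as the paper's: the paper likewise takes $m=1$ for the forward direction and, for the converse, iterates the single-copy inequality to obtain $x(j,n)\ge\sum_{k=0}^{m}\binom{m}{k}x(j+k,n+m)$ before collapsing the binomial weights to $1$, which is just the closed form of your $g_m\ge g_{m+1}$ induction. Your observation that the collapsing step needs the interior rewards to be non-negative is well taken---the paper's final inequality $\sum_{k}\binom{m}{k}x(j+k,n+m)\ge\sum_{k}x(j+k,n+m)$ silently relies on the same non-negativity (i.e., IR)---so you have surfaced an implicit hypothesis of the proposition rather than introduced a new gap.
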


\begin{proof}
(``$\Rightarrow$") If a path mechanism is Sybil-proof, then in Inequality~(\ref{eq:sp}), let $m=1$, it can be easily derived that $x(j,n) \geq x(j,n+1) + x(j+1,n+1)$ for all $j,n\in \mathbb{Z}^+$, $j\leq n$.

(``$\Leftarrow$") If a path mechanism satisfies $x(j,n) \geq x(j,n+1) + x(j+1,n+1)$ for all $j,n\in \mathbb{Z}^+$, $j\leq n$, then
\begin{align*}
    x(j, n) & \geq x(j,n+1) + x(j+1,n+1) \\
    & \geq (x(j,n+2) + x(j+1,n+2)) \\
    & + (x(j+1,n+2) + x(j+2,n+2)) \\
    & \geq \sum_{k=0}^m \binom{m}{k} x(j+k,n+m) \\
    & \geq \sum_{k=0}^m x(j+k,n+m)
\end{align*}
for all $j,n,m\in \mathbb{Z}^+$, $j\leq n$. Hence, the mechanism is Sybil-proof.
\end{proof}

Sybil-proofness says that an agent cannot gain by making multiple copies of herself, while another property says multiple agents could also not collude together to receive a better reward, which is called collusion-proofness~\cite{chen2018collusion}. We will show that these two properties cannot be satisfied together in general.


\begin{definition}
A path mechanism is \textbf{$\lambda$-collusion proof ($\lambda$-CP)} $(\lambda\in \mathbb{Z}^+ \text{ and } \lambda>1)$ if
\begin{equation}\label{eq:gre}
    x(j,n) \leq \sum_{k=0}^{\lambda'-1} x(j+k,n+\lambda'-1)
\end{equation}
for all $1 \leq \lambda'\leq \lambda$, all $j,n\in \mathbb{Z}^+$ such that $j\leq n-\lambda' +1$. If $\lambda$-collusion proof holds for all $\lambda >1$, i.e.,
\[ x(j,n) \leq \sum_{k=0}^m x(j+k,n+m) \]
for all $j,n,m\in \mathbb{Z}^+$ such that $j\leq n$, then, we call this mechanism \textbf{collusion-proof (CP)}.
\end{definition}

Intuitively, $\lambda$-collusion proofness indicates that any group of agents with size less than $\lambda$ cannot get more reward if they pretend to be a single agent. Collusion proofness requires this to be held for all group sizes. 
In real-world applications, agents' ability to form a collusion is always limited as, for example, it is not easy to collude for agents far from each other in the network. Thus, $\lambda$-collusion proofness is an applicable approximation of collusion proofness in practice.

Lastly, to guarantee the reward distributed to agents on the winning path is sufficient, we require that each agent on the winning path can get at least a certain fraction of the winner's reward. This property is inspired by the efficiency of split contracts~\cite{cebrian2012finding}.

\begin{definition}
A path mechanism is \textbf{$\rho$-split secure ($\rho$-SS)}, $0<\rho<1$, if for all agent $i_j\in p_w\setminus \{w\}$, $x_{i_j} \geq \rho x_{i_{j+1}}$, i.e., $x(j,n) \geq \rho x(j+1,n)$ for all $j,n\in \mathbb{Z}^+$, $j< n$.
\end{definition}

Intuitively, the property of $\rho$-split security ensures that each agent on the winning path other than the winner has at least $\rho$ fraction of the reward distributed to her children on the winning path. This guarantee will encourage agents to propagate the query in reality.


\section{Impossibility Results}\label{sec:imp}
Before characterizing the reward mechanisms that satisfy the desirable properties, we prove several impossibility results in this section.

The first impossibility is that if strongly IR is required (i.e., each agent on the winning path is rewarded more than zero), Sybil-proofness and collusion-proofness cannot be held together.

\begin{lemma}\label{lemma:group}
An SIR path mechanism cannot be both Sybil-proof and $\lambda$-collusion-proof for all $\lambda \geq 3$.
\end{lemma}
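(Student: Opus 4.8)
The plan is to derive a contradiction from the simultaneous assumption of SIR, Sybil-proofness, and $\lambda$-collusion-proofness for $\lambda = 3$. The key tension is that Sybil-proofness (via Proposition~\ref{prop:sp}) forces the sum $x(j,n+1) + x(j+1,n+1)$ to be at most $x(j,n)$, so ``splitting'' a node into copies cannot increase total reward, whereas $3$-collusion-proofness forces the reverse inequality $x(j,n) \le x(j,n+1) + x(j+1,n+1)$ (the $\lambda'=2$ case) and also $x(j,n) \le x(j,n+1)+x(j+1,n+1)+x(j+2,n+1)$ is not directly it --- rather the $\lambda' = 3$ case reads $x(j,n) \le x(j,n+2)+x(j+1,n+2)+x(j+2,n+2)$. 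Combining the $\lambda'=2$ inequality with Sybil-proofness immediately yields the equality
\[
x(j,n) = x(j,n+1) + x(j+1,n+1) \qquad \text{for all valid } j,n.
\]
So the first step is to record this equality; it pins down the reward function as an ``additive'' object along the Pascal-triangle recursion.

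The second step is to iterate this equality twice to express $x(j,n)$ in terms of the three values at level $n+2$, obtaining $x(j,n) = x(j,n+2) + 2\,x(j+1,n+2) + x(j+2,n+2)$ (the binomial expansion from the proof of Proposition~\ref{prop:sp} with $m=2$). Now compare this with the $\lambda'=3$ collusion-proofness inequality $x(j,n) \le x(j,n+2) + x(j+1,n+2) + x(j+2,n+2)$. Subtracting gives
\[
x(j,n+2) + 2\,x(j+1,n+2) + x(j+2,n+2) \le x(j,n+2) + x(j+1,n+2) + x(j+2,n+2),
\]
i.e. $x(j+1,n+2) \le 0$. Choosing $j$ and $n$ so that $(j+1, n+2)$ ranges over a legitimate depth/length pair (e.g. take any $n \ge 1$, $j=1$, giving the pair $(2, n+2)$ with $2 \le n+2$), this contradicts SIR, which demands $x(\cdot,\cdot) > 0$ on all valid pairs. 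Since $3$-collusion-proofness is implied by $\lambda$-collusion-proofness for every $\lambda \ge 3$, the claim follows.

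I expect the only real subtlety to be bookkeeping on the index ranges: one must check that the pairs $(j,n)$, $(j,n+1)$, $(j+1,n+1)$, $(j,n+2)$, $(j+1,n+2)$, $(j+2,n+2)$ are all in the domain where the respective properties apply (in particular that $j \le n - \lambda' + 1$ is met for the $\lambda'=2,3$ instances used), so that no inequality is invoked outside its stated scope, and that the node $(j+1,n+2)$ we force to be nonpositive is genuinely a valid winning-path position so that SIR bites. This is routine but worth stating explicitly. The ``hard part,'' such as it is, is simply spotting that one should push the Sybil/collusion comparison to depth $n+2$ (the $m=2$ binomial level) rather than stopping at $n+1$, since at level $n+1$ the two properties are exactly consistent and give only the equality, not a contradiction.
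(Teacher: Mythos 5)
Your proof is correct and follows essentially the same route as the paper's: both exploit the fact that iterating the Sybil-proofness inequality produces binomial coefficients greater than $1$ on the interior terms, while collusion-proofness bounds $x(j,n)$ by the same sum with unit coefficients, so nonnegativity forces an interior reward to be nonpositive, contradicting SIR. The only cosmetic differences are that you specialize to $\lambda=3$ (correctly noting this suffices) and first extract the exact equality at level $n+1$ before iterating, whereas the paper sandwiches directly at level $n+\lambda-1$.
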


\begin{proof}
If such a mechanism exists, then according to the inequality in Proposition~\ref{prop:sp}, let $m = \lambda -1$ and we have
\begin{align}\label{eq:lemma1}
    x(j, n) & \geq \sum_{k=0}^{\lambda-1} \binom{\lambda-1}{k} x(j+k,n+\lambda-1)\notag \\
    & \geq \sum_{k=0}^{\lambda-1} x(j+k,n+\lambda-1)
\end{align}
for any $j,n\in \mathbb{Z}^+$, $j\leq n-\lambda +1$. Together with Inequality~\eqref{eq:gre}, we know that
\[ x(j,n) = \sum_{k=0}^{\lambda-1} x(j+k,n+\lambda-1) \]

Hence, the middle part of Inequality~\eqref{eq:lemma1} is equal to both the left hand side and the right hand side.
\[  \sum_{k=0}^{\lambda-1} \binom{\lambda-1}{k} x(j+k,n+\lambda-1) = \sum_{k=0}^{\lambda-1} x(j+k,n+\lambda-1) \]

Then, since $\lambda\geq 3$ and $\binom{\lambda-1}{0} = \binom{\lambda-1}{\lambda-1} = 1$,
\[ \sum_{k=1}^{\lambda-2} \binom{\lambda-1}{k} x(j+k,n+\lambda-1) = \sum_{k=1}^{\lambda-2} x(j+k,n+\lambda-1) \]
from which we can derive that $x(j+k, n+\lambda -1) = 0$ for all $1\leq k\leq \lambda - 2$ since $\binom{\lambda-1}{k} > 1$ for all $1\leq k\leq \lambda - 2$. This is a contradiction with SIR.
\end{proof}

Following Lemma~\ref{lemma:group}, we can conclude that under the requirement of SIR, being both Sybil-proof and collusion-proof is impossible for a path mechanism.

\begin{proposition}\label{prop:spcp}
An SIR path mechanism cannot be both Sybil-proof and collusion-proof.
\end{proposition}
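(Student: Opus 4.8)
The plan is to derive Proposition~\ref{prop:spcp} as an immediate corollary of Lemma~\ref{lemma:group}. Collusion-proofness, as defined above, requires Inequality~\eqref{eq:gre} to hold for \emph{all} group sizes $m$, in particular for $m=2$, which is exactly the statement that the mechanism is $\lambda$-collusion-proof for $\lambda = 3$ (indeed for every finite $\lambda$). So a collusion-proof path mechanism is in particular $3$-collusion-proof.

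Given that, I would argue by contradiction: suppose an SIR path mechanism is both Sybil-proof and collusion-proof. Then it is Sybil-proof and $3$-collusion-proof, and SIR, contradicting Lemma~\ref{lemma:group} with $\lambda = 3$. Hence no such mechanism exists. The only thing worth stating carefully is why collusion-proofness implies $3$-collusion-proofness, i.e.\ spelling out that taking $m \in \{1,2\}$ in the unbounded inequality $x(j,n) \le \sum_{k=0}^m x(j+k,n+m)$ recovers Inequality~\eqref{eq:gre} for all $1 \le \lambda' \le 3$ with the appropriate index ranges.

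There is essentially no obstacle here — the work was already done in Lemma~\ref{lemma:group}, and the main (trivial) point is just matching the definitions. A short two-line proof suffices. Concretely:

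\begin{proof}
A collusion-proof path mechanism satisfies $x(j,n) \leq \sum_{k=0}^m x(j+k,n+m)$ for all admissible $j,n,m$; taking $m\in\{1,2\}$ shows it is $\lambda$-collusion-proof for $\lambda = 3$. Thus if an SIR path mechanism were both Sybil-proof and collusion-proof, it would be an SIR path mechanism that is Sybil-proof and $3$-collusion-proof, contradicting Lemma~\ref{lemma:group}. Hence no SIR path mechanism is both Sybil-proof and collusion-proof.
\end{proof}
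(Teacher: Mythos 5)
Your proposal is correct and matches the paper's approach: the paper derives Proposition~\ref{prop:spcp} as an immediate consequence of Lemma~\ref{lemma:group}, since collusion-proofness is defined as $\lambda$-collusion-proofness for all $\lambda>1$ and hence in particular for $\lambda=3$. Your careful unpacking of why CP implies $3$-CP is fine but essentially definitional, exactly as the paper treats it.
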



However, we will show that the property of 2-collusion-proof, which says that two agents cannot collude to make a gain, can be satisfied with Sybil-proofness in Section~\ref{sec:dgm}.

The impossibility assumes strongly IR. However, even if we weaken the condition to be IR, Theorem~\ref{thm_tm} shows that the only mechanisms to satisfy both SP and CP are limited to very special mechanisms called two-headed mechanism. 

\begin{mechanism}
A path mechanism is a \textbf{two-headed mechanism} if its reward function satisfies 
\[ x(j,n) = \begin{cases}
a + b & \text{if } j = 1 \text{ and } n = 1 \\
a & \text{if } j = 1 \text{ and } n > 1 \\
b & \text{if } j = n \text{ and } n > 1 \\
0 & \text{otherwise}
\end{cases} \]
where $a,b\geq0$ are constants.
\end{mechanism}

Intuitively, a two-headed mechanism only allocates positive rewards to the first agent and the winner on the winning path and all the other agents receive a zero reward.

\begin{theorem}
\label{thm_tm}
A path mechanism is IR, SP and CP if and only if it is a two-headed mechanism.
\end{theorem}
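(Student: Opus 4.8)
The plan is to prove both directions. The easy direction is that a two-headed mechanism is IR, SP and CP; this is a routine check. IR is immediate since $a,b\geq 0$. For SP, by Proposition \ref{prop:sp} I need $x(j,n) \geq x(j,n+1) + x(j+1,n+1)$: when $j=1$ and $n=1$ the left side is $a+b$ and the right side is $x(1,2)+x(2,2) = a + b$ (since for $n=2$ the winner is at depth $2$); when $j=1$ and $n>1$ the left side is $a$ and the right side is $x(1,n+1)+x(2,n+1) = a + 0$; when $j=n$ and $n>1$ the left side is $b$ and the right side is $x(n,n+1) + x(n+1,n+1) = 0 + b$; the remaining case gives $0 \geq 0$. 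For CP I need $x(j,n) \leq \sum_{k=0}^m x(j+k,n+m)$ for all $m$: the right side always contains the term $x(j,n+m)$ plus $x(j+m, n+m)$, and one checks that whenever $x(j,n)>0$ (i.e., $j=1$ or $j=n$), the corresponding contributions on the right already sum to at least $x(j,n)$.

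The substantive direction is the converse: if a path mechanism is IR, SP and CP, it must be two-headed. The key observation is that SP gives $x(j,n) \geq x(j,n+1) + x(j+1,n+1)$ while CP (with $m=1$) gives $x(j,n) \leq x(j,n+1) + x(j+1,n+1)$, so in fact
\begin{equation}\label{eq:tm-eq}
x(j,n) = x(j,n+1) + x(j+1,n+1)
\end{equation}
for all $j \le n$. More generally, combining SP iterated (as in the proof of Proposition \ref{prop:sp}, which yields $x(j,n) \ge \sum_{k=0}^m \binom{m}{k} x(j+k,n+m)$) with CP, exactly as in the proof of Lemma \ref{lemma:group}, forces
\[
\sum_{k=0}^{m} \binom{m}{k} x(j+k, n+m) = \sum_{k=0}^{m} x(j+k, n+m)
\]
for every $m \ge 1$. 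Taking $m=2$ gives $\binom{2}{1} x(j+1, n+2) = x(j+1,n+2)$, hence $x(j+1, n+2) = 0$ for all applicable $j,n$. This says $x(\ell, N) = 0$ whenever $2 \le \ell \le N-1$, i.e., every agent strictly between the first agent and the winner gets nothing. It remains to pin down $x(1,n)$ and $x(n,n)$. Define $a := x(1,2)$ and $b := x(2,2)$; using \eqref{eq:tm-eq} with $j=1$ and induction on $n$, together with the fact that the interior rewards vanish, I get $x(1,n+1) = x(1,n) - x(2,n)$; for $n \ge 2$ the term $x(2,n)$ is an interior reward (when $n\ge 3$) or the winner's reward (when $n=2$), and careful bookkeeping of \eqref{eq:tm-eq} applied along both the "top" coordinate $j=1$ and the "bottom" coordinate $j=n$ will show $x(1,n) = a$ and $x(n,n) = b$ for all $n > 1$, and $x(1,1) = a+b$ follows from \eqref{eq:tm-eq} with $j=n=1$. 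Finally IR gives $a,b \ge 0$.

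The main obstacle I anticipate is the last bookkeeping step: \eqref{eq:tm-eq} relates values at path-length $n$ to values at path-length $n+1$, but it mixes the "depth" index in a way that is not simply an induction on a single parameter, so I will need to set up the induction on $n$ carefully, simultaneously tracking $x(1,n)$, $x(n,n)$, and the (already-known-to-vanish) interior entries, and verify the boundary case $j=n$ of \eqref{eq:tm-eq} separately since there $x(j+1,n+1)=x(n+1,n+1)$ is itself a "bottom" entry rather than an interior one. Once the interior entries are known to be zero, though, this reduces to a short finite computation rather than anything deep.
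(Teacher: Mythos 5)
Your plan is correct and follows essentially the same route as the paper's proof: the forward direction is the same routine verification, and for the converse you combine the iterated Sybil-proofness bound with CP to force the binomial identity that kills all interior entries $x(\ell,N)$ for $1<\ell<N$ (the paper gets this by citing the argument inside Lemma~\ref{lemma:group}), then use the equality $x(j,n)=x(j,n+1)+x(j+1,n+1)$ along $j=1$ and $j=n$ to pin down $x(1,n)=a$, $x(n,n)=b$ and $x(1,1)=a+b$. The ``bookkeeping'' you worry about is exactly the two one-line telescoping chains in the paper and poses no difficulty once the interior entries vanish.
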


\begin{proof}
    (``$\Leftarrow$") First, it is easy to show that a two-headed mechanism is IR, SP and CP since $x(j,n)\geq 0$ and
    \[\sum_{j=1}^n x(j,n) = a + b\]
    for all $j,n\in\mathbb{Z}^+$ and $j \leq n$.
    
    (``$\Rightarrow$") Then we show that these three properties determine a two-headed mechanism. From the definition of CP, we know that the mechanism is $\lambda$-CP for all $\lambda>1$. Then from Lemma~\ref{lemma:group} we know that a mechanism with SP and $\lambda$-CP satisfies $x(j,n) = 0$ for all $1<j<n$ and $n>1$. Hence, from SP and 2-CP we can derive that
    \[ x(1,2) = x(1,3) + x(2,3) = x(1,3) = \dots = x(1,n) \]
    and
    \[ x(2,2) = x(2,3) + x(3,3) = x(3,3) = \dots = x(n,n) \]
    for all $n>1$.
    
    Hence, there exist two constants $a\geq0$ and $b\geq0$ such that $x(1,n)= a$, $x(n,n)=b$ for all $n>1$ and $x(1,1) = x(1,2) + x(2,2) = a+b$.
\end{proof}

\section{Double Geometric Mechanism}\label{sec:dgm}
In this section, we characterize a class of reward mechanisms with the desirable properties. The mechanism we will characterize is called \textbf{Double Geometric Mechanism (DGM)}, which is a path mechanism defined by two parameters.

\begin{mechanism}
A path mechanism is an $(\alpha, \delta)$ double geometric mechanism ($(\alpha, \delta)$-DGM), for $0<\alpha<1$ and $\delta > 0$, if its reward function satisfies
\[ x(j,n) = (1-\alpha)^{j-1}\alpha^{n-j}\delta \]
for all $j,n\in \mathbb{Z}^+$ such that $j\leq n$.
\end{mechanism}

Intuitively, $(\alpha, \delta)$-DGM has two fractions $\alpha^{n-j}$ and $(1-\alpha)^{j-1}$, which are controlled by the distance to the winner and the requester respectively. Note that if $\alpha<1/2$, the reward is strictly monotone decreasing with the depth on the winning path, while if $\alpha>1/2$, the reward is strictly monotone increasing with the depth on the winning path. Theorem~\ref{thm:dgm} shows that $(\alpha, \delta)$-DGM can satisfy all the desirable properties defined in our model.

\begin{theorem}\label{thm:dgm}
  If a path mechanism is an $(\alpha, \delta)$-DGM with $\frac{\rho}{1+\rho} \leq \alpha < \frac{1}{2}$ $(0<\rho<1)$, then it is IC, SIR, BC, SP, 2-CP and $\rho$-SS.
\end{theorem}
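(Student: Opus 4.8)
The plan is to check the six properties in turn, exploiting the explicit reward function $x(j,n)=(1-\alpha)^{j-1}\alpha^{n-j}\delta$. The single algebraic fact that drives most of the argument is the identity
\[
x(j,n)=x(j,n+1)+x(j+1,n+1),
\]
obtained by pulling $(1-\alpha)^{j-1}\alpha^{n-j}\delta$ out of the right-hand side and using $\alpha+(1-\alpha)=1$. From this identity, together with $0<\alpha<1$, all of the properties except IC will follow essentially by inspection; IC is the only one that requires a genuine argument about the agents' behaviour.

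First, SIR holds because $0<\alpha<1$ and $\delta>0$ make $x(j,n)>0$ for every admissible $(j,n)$. For SP, Proposition~\ref{prop:sp} says it suffices to check $x(j,n)\ge x(j,n+1)+x(j+1,n+1)$, which is exactly the identity above (with equality); and that same identity is precisely the $\lambda'=2$ instance in the definition of $\lambda$-CP (the $\lambda'=1$ instance being trivial), so 2-CP holds as well. For $\rho$-SS I would compute $x(j,n)/x(j+1,n)=\alpha/(1-\alpha)$ and note that $\alpha/(1-\alpha)\ge\rho\iff\alpha\ge\rho/(1+\rho)$, which is exactly the hypothesis on $\alpha$; hence $x(j,n)\ge\rho\,x(j+1,n)$. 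For BC, set $S_n=\sum_{j=1}^n x(j,n)$; since $x(1,n)=\alpha\,x(1,n-1)$ and $x(j,n)=(1-\alpha)\,x(j-1,n-1)$ for $j\ge2$, one gets the recursion $S_n=\alpha\,x(1,n-1)+(1-\alpha)S_{n-1}\le S_{n-1}$ (because $x(1,n-1)\le S_{n-1}$), and therefore $S_n\le S_1=\delta$ for all $n$, so $B=\delta$ works.

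The remaining and harder case is IC, because it is not a statement about the formula but about which outcomes an agent can reach by deviating. From the formula I would isolate two facts: (i) $x(j,n)>0$ for all admissible $(j,n)$, and (ii) for a fixed depth $j$, $x(j,n)$ is strictly decreasing in the path length $n$, since $x(j,n+1)/x(j,n)=\alpha<1$. An agent's depth $j$ is determined by her parent and is beyond her control, and a false answer is infeasible because the answer is verifiable; so her only levers are whether to report a held answer and to which neighbours to propagate. Reporting truthfully makes the winning path through her as short as possible ($n=j$), and propagating to every neighbour can only shorten, never lengthen, the winning path that passes through her and can only help, never hurt, her chances of lying on it; any deviation therefore either lengthens the winning path through her, which lowers her reward by (ii), or removes her from the winning path altogether, which gives her $0<x(j,n)$ by (i). I expect the main obstacle to be making the off-path case airtight: one must argue that withholding an answer or suppressing propagation can never create a new, shorter winning path that happens to run through the deviating agent, i.e. that deleting edges or candidate winners from the query tree cannot manufacture shortcuts from $r$ through her.
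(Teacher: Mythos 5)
Your proposal is correct and follows essentially the same route as the paper's proof: the identity $x(j,n)=x(j,n+1)+x(j+1,n+1)$ (from $\alpha+(1-\alpha)=1$) yields SP and 2-CP, the ratio $x(j,n)/x(j+1,n)=\alpha/(1-\alpha)\geq\rho$ yields $\rho$-SS, positivity yields SIR, and your IC argument (a deviation either lengthens the winning path through the agent, which lowers her reward since $x(j,n)$ decreases in $n$ for fixed $j$, or removes her from the path entirely) is exactly the paper's case analysis. The only cosmetic difference is BC, where you use the recursion $S_n=\alpha\,x(1,n-1)+(1-\alpha)S_{n-1}\leq S_{n-1}$ rather than the paper's closed-form sum $\sum_{j=1}^n x(j,n)=\frac{\alpha^n-(1-\alpha)^n}{2\alpha-1}\,\delta\leq\delta$; both give the bound $B=\delta$.
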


\begin{proof}
Suppose the winning path is $(i_1, i_2, \dots, i_n)$ of length $n$ when the agents behave truthfully. For an agent $i_j$ with $1\leq j<n$, $x_{i_j} = x(j,n)$, if she did not query all her neighbours, then she will be either on the winning path of length $\geq n$ or not on the winning path. By doing so, her reward is either $x(j,n+k) = (1-\alpha)^{j-1}\alpha^{n-j+k}\delta \leq (1-\alpha)^{j-1}\alpha^{n-j}\delta = x(j,n)$ for some $k\geq 0$, or zero, which is not better than behaving truthfully. For the winner $i_n = w$, if she did not provide the answer and further queried her children, then she would be either on the winning path of length $> n$ or not on the winning path, which gives her a reward either $x(n,n+k) = (1-\alpha)^{n-1}\alpha^{k}\delta < (1-\alpha)^{n-1}\delta = x(n,n)$ for some $k\geq 1$ or zero. Hence, $(\alpha, \delta)$-DGM is IC.
    
    For all $j,n\in \mathbb{Z}^+$, $j\leq n$, we have $x(j,n) = (1-\alpha)^{j-1}\alpha^{n-j}\delta > 0$. Hence, $(\alpha, \delta)$-DGM is SIR.
    
    For all $j,n\in \mathbb{Z}^+$, $j\leq n$ and $0<\alpha<\frac{1}{2}$, we have
    \begin{align*}
        \sum_{j=1}^n x(j,n) & = \sum_{j=1}^n (1-\alpha)^{j-1}\alpha^{n-j}\delta \\
        & = \frac{\alpha^n\delta}{1-\alpha} \sum_{j=1}^n \left( \frac{1-\alpha}{\alpha} \right)^j \\
        & = \frac{\alpha^n - (1-\alpha)^n}{2\alpha - 1}\cdot \delta \leq \delta
    \end{align*}
    Hence, $(\alpha, \delta)$-DGM is BC.
    
    For all $j,n\in \mathbb{Z}^+$, $j< n$, we have
    \begin{align*}
       x(j,n) =\ & \alpha x(j,n) + (1-\alpha) x(j,n) \\
        =\ & \alpha (1-\alpha)^{j-1}\alpha^{n-j}\delta + (1-\alpha) (1-\alpha)^{j-1}\alpha^{n-j}\delta \\
        =\ & (1-\alpha)^{j-1}\alpha^{n+1-j}\delta + (1-\alpha)^{j}\alpha^{n+1-(j-1)}\delta \\
        =\ & x(j,n+1) + x(j+1,n+1)
    \end{align*}
    Hence, $(\alpha, \delta)$-DGM is 2-CP. According to Proposition~\ref{prop:sp}, it is also SP.
    
    Finally, for all $j,n\in \mathbb{Z}^+$, $j< n$, we have
    \[ \frac{x(j,n)}{x(j+1,n)} = \frac{\alpha}{1-\alpha} \geq \frac{\rho/(1+\rho)}{1-\rho/(1+\rho)} = \rho \]
    Hence, $(\alpha, \delta)$-DGM is $\rho$-SS.
\end{proof}

$(\alpha,\delta)$-DGM satisfies all the desirable properties. Then we wonder are they the only mechanisms to satisfy these properties. Under some mild conditions, we will prove that $(\alpha,\delta)$-DGM is indeed the only mechanism to satisfy all the properties.

Note that if there are only two agents on the winning path, it is a very special case that has almost no constraints (the reward can be assigned arbitrarily), which suggests that it acts like an initial condition to the reward function. To satisfy the property of $\rho$-SS, we should have $x(1,2)\geq \rho x(2,2)$ and we let $x(1,2)=\rho x(2,2)$, which is the simplest way to construct the initial condition. We say a mechanism uses $\rho$-base condition if its reward function $x$ satisfies $x(1,2) = \rho x(2,2)$.
Theorem~\ref{thm_cha} proves that under the base condition, $(\alpha, \delta)$-DGM is the only kind of mechanism to satisfy all the properties.

\begin{theorem}
\label{thm_cha}
If a path mechanism is IC, SIR, BC, SP, 2-CP, $\rho$-SS and uses $\rho$-base condition, then it is an $(\alpha, \delta)$-DGM with $\alpha = \frac{\rho}{1+\rho}$.
\end{theorem}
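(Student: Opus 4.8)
The plan is to prove the statement by induction on the length $n$ of the winning path, showing $x(j,n)=(1-\alpha)^{j-1}\alpha^{n-j}\delta$ with $\alpha=\tfrac{\rho}{1+\rho}$ and a $\delta>0$ read off from the base case. First I would extract the recurrence that drives everything: by Proposition~\ref{prop:sp}, SP is exactly $x(j,n)\ge x(j,n+1)+x(j+1,n+1)$, while the 2-CP inequality $x(j,n)\le x(j,n+1)+x(j+1,n+1)$ is the reverse, so
\[
x(j,n)=x(j,n+1)+x(j+1,n+1)\qquad(1\le j\le n).
\]
Specialising this at $(j,n)=(1,1)$ together with the $\rho$-base condition $x(1,2)=\rho\,x(2,2)$, and writing $c:=x(2,2)>0$ (positive by SIR), gives $x(1,2)=\rho c$ and $x(1,1)=(1+\rho)c$. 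Putting $\alpha:=\tfrac{\rho}{1+\rho}$ and $\delta:=(1+\rho)c$, one checks $0<\alpha<\tfrac12$, $\delta>0$, and that $x(1,1),x(1,2),x(2,2)$ already coincide with the $(\alpha,\delta)$-DGM values; matching the geometric shape to these three base values is exactly what forces $\alpha=\tfrac{\rho}{1+\rho}$ (from $x(1,2)/x(2,2)=\alpha/(1-\alpha)=\rho$), and it settles the induction for $n\le 2$.

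For the inductive step, assume $x(j,m)=(1-\alpha)^{j-1}\alpha^{m-j}\delta$ for all $m\le n$ (with $n\ge 2$). The recurrence then imposes $n$ linear equations on the $n{+}1$ unknowns $x(1,n{+}1),\dots,x(n{+}1,n{+}1)$. A particular solution is the DGM vector $y_j^{*}:=(1-\alpha)^{j-1}\alpha^{n+1-j}\delta$, since $\alpha+(1-\alpha)=1$ makes $y_j^{*}+y_{j+1}^{*}=x(j,n)$; and the homogeneous system $h_j+h_{j+1}=0$ forces $h_j=(-1)^{j-1}h_1$, so the general solution is $x(j,n{+}1)=y_j^{*}+(-1)^{j-1}s$ for a single free scalar $s$. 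To kill $s$ I would use $\rho$-SS at the two shallowest depths: because the induction hypothesis gives $\rho\,y_{j+1}^{*}=y_j^{*}$ exactly (the ratio $\alpha/(1-\alpha)$ equals $\rho$), the inequality $x(j,n{+}1)\ge\rho\,x(j{+}1,n{+}1)$ collapses to $(-1)^{j-1}(1+\rho)s\ge 0$; taking $j=1$ gives $s\ge 0$ and $j=2$ (available since $n\ge 2$) gives $s\le 0$, hence $s=0$ and $x(\cdot,n{+}1)$ is the $(\alpha,\delta)$-DGM vector, closing the induction. That $0<\alpha<\tfrac12$ and $\delta>0$ then confirms it is a genuine $(\alpha,\delta)$-DGM.

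The step I expect to be the real obstacle is the reward of the winner, i.e.\ the deepest position at each new level: the single degree of freedom in the recurrence sits precisely there, so one must be careful about whether 2-CP genuinely supplies the constraint $x(n,n)\le x(n,n+1)+x(n+1,n+1)$ (the parent--winner merge). If that instance is unavailable and one only has the recurrence for $j\le n-1$, I would pin the winner down by a limiting argument instead: SP gives $x(m,m)\ge x(m,m+1)+x(m+1,m+1)$, so $x(m,m)$ is non-increasing; it is bounded below by $0$ (IR), and its limit must be $0$ (otherwise $\rho$-SS forces $x(m{-}1,m)\ge\rho\cdot(\text{positive constant})$, contradicting $x(m{-}1,m)=(1-\alpha)^{m-2}\alpha\delta\to 0$); telescoping the SP inequalities and inserting the already-established values $x(i,i{+}1)=(1-\alpha)^{i-1}\alpha\delta$ gives $x(m,m)\ge\sum_{i\ge m}(1-\alpha)^{i-1}\alpha\delta=(1-\alpha)^{m-1}\delta$, while SP (or $\rho$-SS) at depth $m{-}1$ gives the matching upper bound, forcing $x(m,m)=(1-\alpha)^{m-1}\delta$. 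IC and BC enter only lightly, to supply the monotonicity and boundedness used in this convergence step.
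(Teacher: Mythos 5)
Your proof is correct and, at its core, runs on the same engine as the paper's: combine Proposition~\ref{prop:sp} (SP) with 2-CP to get the equality $x(j,n)=x(j,n+1)+x(j+1,n+1)$, then induct on the path length, using $\rho$-SS at two adjacent depths to eliminate the one remaining degree of freedom per level. The packaging differs: the paper chases ratios, setting $\gamma_2=\gamma(1+1/\gamma_1)-1$ and showing that $\gamma=\rho$ together with $\gamma_1,\gamma_2\geq\rho$ forces $\gamma_1=\gamma_2=\rho$, so the ratio $\rho$ propagates from the base condition $x(1,2)=\rho x(2,2)$ to every level; you instead write the level-$(n{+}1)$ solution as a DGM particular solution plus an alternating homogeneous term $(-1)^{j-1}s$ and use $\rho$-SS at $j=1$ and $j=2$ to force $s=0$. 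Your version makes the single degree of freedom explicit and needs only two $\rho$-SS instances per level rather than one per depth, which is arguably cleaner; both yield the same recursion $x(j,n+1)=\frac{\rho}{1+\rho}x(j,n)$, $x(j+1,n+1)=\frac{1}{1+\rho}x(j,n)$ and hence the DGM formula with $\delta=x(1,1)$. Your flagged concern about the winner position is well placed: the literal definition of $\lambda$-CP only requires the merge inequality for $j\leq n-\lambda'+1$, which for $\lambda'=2$ excludes $x(n,n)\leq x(n,n+1)+x(n+1,n+1)$; the paper's own proof nevertheless uses exactly this instance (via $x(j+1,n)=x(j+1,n+1)+x(j+2,n+1)$ at $j+1=n$) without comment, so the intended reading is clearly $j\leq n$, consistent with the CP definition stated immediately after. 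Under that reading your main argument goes through verbatim and your fallback is unnecessary; as a standalone repair the fallback is shakier, since pinning down the near-diagonal entries $x(n,n+1)$ in the first place already leans on the diagonal values for small $n$, so the telescoping bound risks circularity — but this does not affect the correctness of your primary argument.
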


\begin{proof}
    First consider the value $x(j,n)$, $x(j+1,n)$, $x(j,n+1)$, $x(j+1,n+1)$ and $x(j+2,n+1)$ for some $j,n\in \mathbb{Z}^+$ and $j<n$. Suppose that $x(j,n) = \gamma x(j+1,n)$, $x(j,n+1) = \gamma_2 x(j+1,n+1)$ and $x(j+1,n+1) = \gamma_1 x(j+2,n+1)$.
    
    Then according to the property of SP and 2-CP, we know that $x(j,n) = x(j,n+1) + x(j+1,n+1)$ and $x(j+1,n) = x(j+1,n+1) + x(j+2,n+1)$, from which we have
    \begin{align*}
        x(j,n) &= (1+\gamma_2)\gamma_1 x(j+2,n+1) \\
        x(j+1,n) &= (1+\gamma_1)x(j+2,n+1)
    \end{align*}
    
    Hence, by $x(j,n) = \gamma x(j+1,n)$, we have
    \[ \gamma_2 = \gamma\left( 1+\frac{1}{\gamma_1} \right) - 1 \]
    where according to the property of $\rho$-SS, we have $\gamma\geq\rho$, $\gamma_1\geq\rho$ and $\gamma_2\geq\rho$.
    
    If $x(j,n)=\rho x(j+1,n)$, i.e., $\gamma = \rho$, then
    \[ \gamma_2 = \rho\left( 1+\frac{1}{\gamma_1} \right) - 1 \geq \rho \]
    which suggests that $\gamma_1\leq \rho$. Hence $\gamma_1 = \rho$ and then $\gamma_2 = \rho$. Note that we have $x(1,2) = \rho x(2,2)$ as the base condition, so from above we have $x(1,3) = \rho x(2,3)$ and $x(2,3) = \rho x(3,3)$. Then, by induction, $x(j,n) = \rho x(j+1,n)$ holds for any $j,n\in \mathbb{Z}^+$ and $j<n$. Therefore, the following recursive relation holds for any $j,n\in \mathbb{Z}^+$ and $j\leq n$:
    \[ \begin{cases}
    x(j,n+1) = \frac{\rho}{1+\rho}x(j,n) \\
    x(j+1,n+1) = \frac{1}{1+\rho}x(j,n)
    \end{cases}\]
    
    Denote $\delta = x(1,1)>0$, then we can derive that
    \[ x(j,n) = \left(\frac{1}{1+\rho}\right)^{j-1}\left(\frac{\rho}{1+\rho}\right)^{n-j}\delta = (1-\alpha)^{j-1}\alpha^{n-j}\delta \]
    with $\alpha = \rho/(1+\rho)$ for any $j,n\in \mathbb{Z}^+$ and $j\leq n$, which is an $(\alpha, \delta)$-DGM. Also according to Theorem~\ref{thm:dgm}, the property of IC, SIR and BC will not be hurt. Therefore, these properties will uniquely determine the $(\alpha, \delta)$-DGM.
\end{proof}

So far, we have shown that $(\alpha,\delta)$-DGM is the only kind of mechanism to satisfy the properties under the $\rho$-base condition. Note that, without the $\rho$-base condition, we may get a different mechanism to satisfy all the properties, but it is still a DGM-like mechanism with a bounded difference. It suggests that the space of mechanisms removed by this base condition is also limited. Therefore, the base condition does not significantly hurt the generality of our result.

As Proposition~\ref{prop:spcp} shows that SP and CP cannot be held together under SIR, the above mechanism can only satisfy 2-CP. In the following, we show how much extra gain a group of agents could get if they collude together and a weaker concept of CP. Furthermore, we will investigate the cost of the requester and show how to minimize it.


\subsection{Approximation of Collusion-proofness}
\begin{definition}
A path mechanism is called \textbf{$\beta$-approximate collusion-proof ($\beta$-ACP)} if its reward function $x$ satisfies
\[ x(j,n) \leq \beta \sum_{k=0}^m x(j+k,n+m) \]
for all $j,n,m\in \mathbb{Z}^+$, $j\leq n$.
\end{definition}

Intuitively, the property of $\beta$-approximate collusion-proof ensures that if some agents pretend to be a single agent, they can achieve at most $\beta$ times of their original reward. However, we show that a constant approximation cannot be achieved along with the other properties.

\begin{proposition}
An IC, SIR, BC, SP, 2-CP and $\rho$-SS path mechanism with $\rho$-base condition cannot be $(1+\epsilon)$-ACP for any $\epsilon>0$.
\end{proposition}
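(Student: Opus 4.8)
The plan is to invoke the characterization already proved. By Theorem~\ref{thm_cha}, any path mechanism that is IC, SIR, BC, SP, 2-CP, $\rho$-SS and uses the $\rho$-base condition must be an $(\alpha,\delta)$-DGM with $\alpha = \frac{\rho}{1+\rho}$, and since $0<\rho<1$ we have $0<\alpha<\frac12$. So it suffices to show that no such $(\alpha,\delta)$-DGM with $\alpha<\frac12$ is $(1+\epsilon)$-ACP.

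The key step is a direct computation of the collusion sum. For any $j,n,m\in\mathbb{Z}^+$ with $j\le n$,
\[
\sum_{k=0}^{m} x(j+k,n+m)
= \sum_{k=0}^{m}(1-\alpha)^{j+k-1}\alpha^{n+m-j-k}\delta
= x(j,n)\sum_{k=0}^{m}(1-\alpha)^{k}\alpha^{m-k},
\]
so that the achievable amplification ratio is exactly $x(j,n)\big/\sum_{k=0}^m x(j+k,n+m) = 1/S_m$, where $S_m := \sum_{k=0}^{m}(1-\alpha)^{k}\alpha^{m-k}$ does not depend on $j$ or $n$. Summing the finite geometric series gives $S_m = \dfrac{(1-\alpha)^{m+1}-\alpha^{m+1}}{1-2\alpha}$, which is well defined since $\alpha\neq\frac12$. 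Because $0<\alpha<1-\alpha<1$, both $(1-\alpha)^{m+1}\to 0$ and $\alpha^{m+1}\to 0$ as $m\to\infty$, hence $S_m\to 0$ and $1/S_m\to\infty$.

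To finish, fix any $\epsilon>0$ and take $j=n=1$ (so the constraint $j\le n$ holds), and choose $m$ large enough that $\dfrac{1-2\alpha}{(1-\alpha)^{m+1}-\alpha^{m+1}}>1+\epsilon$, i.e. $1/S_m>1+\epsilon$. Then $x(1,1) > (1+\epsilon)\sum_{k=0}^{m}x(1+k,1+m)$, which contradicts the definition of $(1+\epsilon)$-ACP. I do not expect a genuine obstacle here: the whole argument reduces, via Theorem~\ref{thm_cha}, to the elementary fact that the tail sums $S_m$ of the two-sided geometric reward vanish when $\alpha<\frac12$; the only minor care needed is to pick a valid triple $(j,n,m)$ and to note $S_m$ is independent of $j,n$, so the worst case already appears at $j=n=1$.
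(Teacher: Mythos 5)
Your proposal is correct and follows essentially the same route as the paper's proof: invoke Theorem~\ref{thm_cha} to reduce to an $(\alpha,\delta)$-DGM with $\alpha=\rho/(1+\rho)<1/2$, compute the collusion sum as a geometric series equal to $x(j,n)\cdot\frac{(1-\alpha)^{m+1}-\alpha^{m+1}}{1-2\alpha}$, and observe that this factor vanishes as $m\to\infty$, contradicting $(1+\epsilon)$-ACP for any fixed $\epsilon>0$. The only cosmetic difference is that you instantiate $j=n=1$ explicitly, whereas the paper leaves $j,n$ general; since the ratio is independent of $j,n$, this changes nothing.
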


\begin{proof}
    Note that an IC, SIR, BC, SP, 2-CP and $\rho$-SS path mechanism with $\rho$-base condition must be an $(\alpha,\delta)$-DGM with $\alpha = \rho /(1+\rho)$. Then $(1+\epsilon)$-ACP implies that
    \[ (1-\alpha)^{j-1}\alpha^{n-j}\delta \leq (1+\epsilon) \sum_{k=0}^m (1-\alpha)^{j+k-1}\alpha^{n+m-j-k}\delta \]
    from which we can derive that
    \[ (1-\alpha)^{m+1} - \alpha^{m+1} \geq \frac{1-2\alpha}{1+\epsilon} \]
    
    However, for any given $\epsilon>0$ and $0<\alpha<1/2$, there exists $N>0$ such that for any $m>N$, $(1-\alpha)^{m+1} - \alpha^{m+1} < (1-2\alpha)/(1+\epsilon)$ since the left hand side approaches to 0 when $m$ approaches to $\infty$. Hence, it cannot be $(1+\epsilon)$-ACP.
\end{proof}

On the other hand, an exponential approximation is easy to be achieved by an $(\alpha,\delta)$-DGM.

\begin{theorem}
If a path mechanism is an $(\alpha,\delta)$-DGM with $0<\alpha<\frac{1}{2}$, then it is $2^m$-ACP, where $m$ is the number of agents who collude together.
\end{theorem}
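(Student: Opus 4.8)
The plan is to show that for an $(\alpha,\delta)$-DGM with $0<\alpha<\tfrac12$, the inequality $x(j,n) \leq 2^m \sum_{k=0}^m x(j+k,n+m)$ holds for all relevant $j,n,m$, where $m$ is understood as one less than the number of colluding agents (so that $m+1$ agents at depths $j, j+1, \dots, j+m$ on a path of length $n+m$ collapse into a single agent at depth $j$ on a path of length $n$). The cleanest route is to bound the collusion-sum from below by its largest term, or more simply to compare it directly against $x(j,n)$ using the explicit formula $x(j,n) = (1-\alpha)^{j-1}\alpha^{n-j}\delta$.

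First I would substitute the closed form into the sum. We have
\[
\sum_{k=0}^m x(j+k,n+m) = \sum_{k=0}^m (1-\alpha)^{j+k-1}\alpha^{n+m-j-k}\delta = (1-\alpha)^{j-1}\alpha^{n-j}\delta \sum_{k=0}^m (1-\alpha)^k\alpha^{m-k},
\]
so the claim reduces to showing $\sum_{k=0}^m (1-\alpha)^k\alpha^{m-k} \geq 2^{-m}$, i.e. that the $k$-th elementary-power sum $\sum_{k=0}^m (1-\alpha)^k \alpha^{m-k} = \frac{(1-\alpha)^{m+1}-\alpha^{m+1}}{1-2\alpha}$ is at least $(1/2)^m$. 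Since each of the $m+1$ summands $(1-\alpha)^k\alpha^{m-k}$ is positive, it suffices to find one term that is $\geq 2^{-m}$, or to bound the whole geometric-type sum. A convenient term is the ``middle'' one when $m$ is even, but a uniform bound is cleaner: note $(1-\alpha)^k \alpha^{m-k} \geq \min(\alpha,1-\alpha)^m$ is too weak; instead observe that by the AM–GM / convexity argument, $\sum_{k=0}^m (1-\alpha)^k\alpha^{m-k}$ dominates the single term $k=m$, giving $(1-\alpha)^m$, which is $\geq 2^{-m}$ precisely because $1-\alpha > 1/2$ when $\alpha < 1/2$. That single inequality $(1-\alpha)^m \geq (1/2)^m$ finishes it.

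So the key steps in order are: (i) plug the DGM reward formula into both sides and cancel the common factor $(1-\alpha)^{j-1}\alpha^{n-j}\delta$; (ii) reduce to the scalar inequality $\sum_{k=0}^m (1-\alpha)^k\alpha^{m-k} \geq 2^{-m}$; (iii) drop all terms except $k=m$ to get the lower bound $(1-\alpha)^m$; (iv) invoke $\alpha<\tfrac12 \Rightarrow 1-\alpha>\tfrac12 \Rightarrow (1-\alpha)^m \geq 2^{-m}$, and conclude $x(j,n) \leq 2^m\sum_{k=0}^m x(j+k,n+m)$, which is exactly $2^m$-ACP.

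I do not expect a real obstacle here; the only subtlety is bookkeeping the meaning of $m$ versus the group size (the statement says ``$m$ is the number of agents who collude together'', whereas the ACP definition indexes by $m$ with $m+1$ effectively colliding), so I would state at the outset which convention I use and make sure the exponent $2^m$ matches. A secondary minor point is that the bound is not tight — one could get $\left(\tfrac{1}{1-\alpha}\right)^m$ or even the exact $\frac{1-2\alpha}{(1-\alpha)^{m+1}-\alpha^{m+1}}$ — but since the theorem only asks for $2^m$, the crude term-dropping bound is the shortest honest proof.
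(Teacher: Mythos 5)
Your proof is correct, and it diverges from the paper's at the decisive step. Both proofs begin identically: substitute $x(j,n)=(1-\alpha)^{j-1}\alpha^{n-j}\delta$, cancel the common factor, and reduce $2^m$-ACP to the scalar inequality $\sum_{k=0}^m (1-\alpha)^k\alpha^{m-k}\geq 2^{-m}$. The paper then evaluates this sum in closed form as $\frac{(1-\alpha)^{m+1}-\alpha^{m+1}}{1-2\alpha}$, rearranges the target inequality to $(1-\alpha)^{m+1}-\frac{1-\alpha}{2^m}\geq \alpha^{m+1}-\frac{\alpha}{2^m}$, and finishes by a sign analysis of $f(y)=y\bigl(y^m-\tfrac{1}{2^m}\bigr)$, which is negative on $(0,\tfrac12)$ and positive on $(\tfrac12,1)$. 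You instead keep only the $k=m$ term of the (positive) sum and use $(1-\alpha)^m>2^{-m}$ directly; this is shorter and more elementary, and it sidesteps the division by $1-2\alpha$. What the paper's route buys is the exact approximation ratio $\frac{1-2\alpha}{(1-\alpha)^{m+1}-\alpha^{m+1}}$, which is the same quantity used in the preceding proposition to rule out constant-factor ACP — so the two results share one computation. Two small remarks: your invocation of ``AM--GM / convexity'' is unnecessary (plain positivity of the discarded terms is all you use, and your text already says so), and your flag about the off-by-one between ``$m$ colluding agents'' in the theorem statement and the $m+1$ summands in the ACP definition is a real looseness in the paper itself, not in your argument.
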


\begin{proof}
    If an $(\alpha,\delta)$-DGM is $2^m$-ACP, it suggests that
    \[ (1-\alpha)^{j-1}\alpha^{n-j}\delta \leq 2^m \sum_{k=0}^m (1-\alpha)^{j+k-1}\alpha^{n+m-j-k}\delta \]
    from which we can derive the equivalent inequality that
    \[ (1-\alpha)^{m+1} - \alpha^{m+1} \geq \frac{1-2\alpha}{2^m} \]
    or being rearranged as
    \[ (1-\alpha)^{m+1} - \frac{1-\alpha}{2^m} \geq \alpha^{m+1} - \frac{\alpha}{2^m} \]
    
    Notice that for function $f(y) = y\left( y^m - \frac{1}{2^m} \right)$, $f(y)<0$ for any $0<y<1/2$ and $f(y)>0$ for any $1/2<y<1$. Since $0<\alpha<1/2$, the above inequality always holds.
\end{proof}


\subsection{Cost Minimization}
Next, we consider the case where the requester is willing to minimize her cost to query the answer.

\begin{definition}
A path mechanism is \textbf{of minimum cost} over a class of path mechanisms $\mathcal{X}$ if its reward function $x$ satisfies
\[ x\in \arg\min_{x\in \mathcal{X}} \sum_{j=1}^n x(j,n)  \]
for all $n\in \mathbb{Z}^+$.
\end{definition}

Intuitively, a path mechanism is of minimum cost if it can minimize the total reward distributed to the agents on the winning path. Notice that when we minimize the cost, we consider time-critical mechanisms, which are essential to the real-world application~\cite{pickard2011time}. A path mechanism is time-critical if the winner always takes the maximum reward, i.e., $x(n,n) = \max_j x(j,n)$ for any $n\geq 1$. Now we show that the $(\alpha, \delta)$-DGM also characterizes the space of time-critical mechanisms of minimum cost.

\begin{theorem}
A path mechanism is of minimum cost over a class of time-critical path mechanisms that satisfy IC, SIR, BC, SP, 2-CP, $\rho$-SS and have $x(1,1) = \delta$ if and only if it is an $(\alpha, \delta)$-DGM with $\alpha = 1/2$.
\end{theorem}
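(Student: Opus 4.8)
The plan is to turn the combination of SP and $2$-CP into a complete structural description of the reward function and then read the cost off a moment representation. Proposition~\ref{prop:sp} turns SP into $x(j,n)\ge x(j,n+1)+x(j+1,n+1)$, while $2$-CP (Inequality~\eqref{eq:gre} with $\lambda'=2$) gives the reverse inequality, so every mechanism in the class satisfies the Pascal-type identity $x(j,n)=x(j,n+1)+x(j+1,n+1)$ for all $j\le n$. Setting $f(n):=x(1,n)$ and unrolling this identity, I would prove by induction that $x(j,n)=(-1)^{j-1}\Delta^{j-1}f(n-j+1)$, where $\Delta$ is the forward difference in the second argument. Since SIR makes every such value positive, $(f(n))_{n\ge1}$ is a completely monotone sequence, so by Hausdorff's characterization there is a finite positive Borel measure $\nu$ on $[0,1]$ with $f(n)=\int_0^1 t^{\,n-1}\,d\nu(t)$ and $\nu([0,1])=x(1,1)=\delta$. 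Because $\Delta^{k}(t^{m-1})=(-(1-t))^{k}t^{m-1}$, this rewrites as
\[ x(j,n)=\int_0^1 (1-t)^{j-1}t^{\,n-j}\,d\nu(t)\qquad(1\le j\le n), \]
and the $(\alpha,\delta)$-DGM is precisely the case where $\nu$ is a single point mass of weight $\delta$ at $\alpha$.

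With the representation in hand, the cost is $\sum_{j=1}^n x(j,n)=\int_0^1 g_n(t)\,d\nu(t)$, where $g_n(t)=\sum_{i=0}^{n-1}(1-t)^i t^{\,n-1-i}=\frac{t^n-(1-t)^n}{2t-1}$. Substituting $v=2t-1$ and using $(1+v)^n-(1-v)^n=2\sum_{k\ge0}\binom{n}{2k+1}v^{2k+1}$ gives $g_n(t)=2^{1-n}\sum_{k\ge0}\binom{n}{2k+1}v^{2k}$, which is minimized at $v=0$, i.e.\ at $t=\tfrac12$, with value $n/2^{n-1}$, and strictly so for every $t\ne\tfrac12$ once $n\ge3$ (then $\binom{n}{3}\ge1$). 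Hence $\sum_j x(j,n)\ge \frac{n}{2^{n-1}}\nu([0,1])=\frac{n\delta}{2^{n-1}}$ for every mechanism in the class. To close the equivalence: the $(\tfrac12,\delta)$-DGM, $x(j,n)=\delta/2^{n-1}$, lies in the class — it is SIR, BC with $B=\delta$, SP and $2$-CP with equality, IC since $\delta/2^{n-1}$ is non-increasing in $n$, time-critical since all depths are paid equally, and $\rho$-SS because $1\ge\rho$ — and it attains the bound for every $n$, so it is of minimum cost. Conversely, if a mechanism in the class is of minimum cost, then its level-$3$ cost is $3\delta/4$, hence $\int_0^1(g_3(t)-\tfrac34)\,d\nu(t)=0$; the integrand is non-negative and vanishes only at $t=\tfrac12$, so $\nu$ is the point mass of weight $\delta$ at $\tfrac12$, and the representation gives $x(j,n)=\delta/2^{n-1}$, i.e.\ the $(\tfrac12,\delta)$-DGM.

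The step I expect to be the main obstacle is the integral representation — recognizing the completely monotone structure behind the Pascal identity and invoking the moment theorem — because the naive alternative of minimizing the cost level by level does not work: the constraints couple far-apart levels. For instance, with $\rho=\tfrac12$ one can push $x(2,3)$ up to the largest value allowed by $\rho$-SS and time-criticality, which drives the level-$3$ cost strictly below $3\delta/4$; but the Pascal identity then forces the level-$4$ rewards into a configuration with $x(4,4)<x(3,4)$, violating time-criticality, so no mechanism actually realizes that cost. Only the global moment representation exposes the true bound $n\delta/2^{n-1}$ and makes both directions of the equivalence go through.
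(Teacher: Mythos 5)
Your proof is correct, and it takes a genuinely different route from the paper's. The paper argues locally: from the Pascal identity $x(j,n)=x(j,n+1)+x(j+1,n+1)$ (which you also derive from SP plus $2$-CP) it gets the recursion $2R_{n+1}=R_n+x(1,n+1)+x(n+1,n+1)$ for $R_n=\sum_j x(j,n)$, and then runs an induction on the ratios $\gamma_j=x(j,n)/x(j+1,n)$, using time-criticality and $\rho$-SS to force $\gamma_j=1$ at every level and hence $x(j,n)=\delta/2^{n-1}$. You instead globalize: complete monotonicity of $n\mapsto x(1,n)$ plus Hausdorff's moment theorem yield $x(j,n)=\int_0^1(1-t)^{j-1}t^{n-j}\,d\nu(t)$ with $\nu([0,1])=\delta$, the level-$n$ cost becomes $\int g_n\,d\nu$ with $g_n\ge n/2^{n-1}$ and equality only at $t=1/2$ for $n\ge 3$, and the converse falls out of $\int(g_3-3/4)\,d\nu=0$. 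What your approach buys is twofold. First, the lower bound $n\delta/2^{n-1}$ needs only SIR, SP, $2$-CP and $x(1,1)=\delta$ — time-criticality and $\rho$-SS enter only to place the $(1/2,\delta)$-DGM inside the class, which you verify directly (necessary, since Theorem~\ref{thm:dgm} covers only $\alpha<1/2$). Second, your closing example is not just a pedagogical remark: it identifies a real soft spot in the local argument. With $\rho=1/2$ one can take $x(2,2)=4\delta/7$, $x(2,3)=2\delta/7$, giving level-$3$ cost $5\delta/7<3\delta/4$ while satisfying every constraint visible at levels $\le 3$ (here $\gamma=3/4$, $\gamma_1=1$, $\gamma_2=1/2$, all in $[\rho,1]$), so the constrained level-by-level minimum is \emph{not} attained at $\gamma=\gamma_1=\gamma_2=1$; the configuration only dies at level $4$, where SIR and time-criticality become incompatible. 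The moment representation is exactly what turns this extendability-to-all-depths requirement into a usable global constraint. The price you pay is invoking Hausdorff's theorem where the paper stays elementary, but both directions of the equivalence go through cleanly in your write-up.
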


\begin{proof}
    First we show a lower bound of the total cost before we prove the statement. According to the property of SP and 2-CP, denote $\sum_{j=1}^n x(j,n) = R_n$, then we have
    \[ 2R_{n+1} = R_n + (x(1,n+1) + x(n+1,n+1)) \]
    with $R_1 = R_2 = \delta$. Hence, to minimize the total cost is to minimize the term $x(1,n+1) + x(n+1,n+1)$. Suppose that $x(1,2) = \gamma x(2,2)$, $x(1,3) = \gamma_2 x(2,3)$ and $x(2,3) = \gamma_1 x(3,3)$ where $\rho\leq \gamma,\gamma_1,\gamma_2\leq 1$ with $\gamma = \rho + \epsilon$. Then similar to the process in Theorem~\ref{thm_cha}, we can derive that $\gamma_2 = \gamma(1+1/\gamma_1) - 1$ and $\gamma_1 \leq (\rho+\epsilon)/(1-\epsilon)$. Hence,
    \[ x(1,3) + x(3,3) \geq \left(1-\frac{2(\rho + \epsilon)}{(\rho+1)(1+\rho+\epsilon)}\right)\delta \]
    
    As we want to minimize the cost, the minimum value will be reached if $\gamma = \gamma_1 = \gamma_2 = 1$, which means $x(1,2)=x(2,2)$ and $x(1,3)=x(2,3)=x(3,3)$. By induction, if for some $n$, $x(j,n) = x(j+1,n)$ for any $j<n$, suppose $x(j,n+1)=\gamma_j x(j+1,n+1)$ for any $j\leq n$, then we have $1+\gamma_{n}=\gamma_{n}(1+\gamma_{n-1}) = \dots = \gamma_n\cdots\gamma_2(1+\gamma_1)$ with $\rho\leq \gamma_j\leq 1$. So that the solution could only be $\gamma_1 = \dots = \gamma_n = 1$. Therefore, $x(j,n)=x(j+1,n)$ for any $j,n\in\mathbb{Z}^+$ and $j<n$.
    
    (``$\Leftarrow$") First, an $(\alpha,\delta)$-DGM is IC, SIR, BC, SP, 2-CP, $\rho$-SS and have $x(1,1) = \delta$ by Theorem~\ref{thm:dgm}. Then, for an $(\alpha,\delta)$-DGM with $\alpha = 1/2$, the total reward distributed is
    \[ \sum_{j=1}^n x(j,n) = \sum_{j=1}^n (1-\alpha)^{j-1}\alpha^{n-j}\delta = n\left(\frac{1}{2}\right)^{n-1} \delta \]
    for any $n\geq 1$. Besides, from the above we know $2R_{n} \geq R_{n-1} + \delta/2^{n-2}$ for $n\geq 3$ and $R_2 = \delta$. Then $R_n \geq (n\delta)/2^{n-1} $. Since $(\alpha,\delta)$-DGM with $\alpha = 1/2$ meets this lower bound, then it is the mechanism of minimum cost.

    (``$\Rightarrow$") By the equation of $x(j,n) = x(j+1,n)$ and $x(1,1)=\delta$, we can derive that $x(j,n) = \delta/2^{n-1}$, which is the same as for the $(1/2, \delta)$-DGM. Therefore, the mechanism of minimum cost over these properties uniquely determines the $(1/2, \delta)$-DGM.
\end{proof}

\section{Conclusion}\label{sec:con}
We have investigated Sybil-proof answer querying mechanisms on networks in a dominant strategy implementation. We proposed a class of double geometric mechanisms (DGM) to against Sybil attacks and characterized its uniqueness under other important properties such as IC, IR and 2-CP. We also characterized the mechanisms for minimizing the requester's reward expenses and illustrated the performance of the mechanisms in terms of the approximation of collusion-proofness.
There are several other interesting aspects worth further investigation. There is a gap between constant approximation and exponential approximation of collusion-proofness. The characterization of Sybil-proof mechanisms in a general directed graph is also missing.




\bibliographystyle{named}
\bibliography{ijcai20}

\end{document}